\newtheorem{lemma}{Lemma}
\newcommand\l@subroutine{\@dottedtocline{1}{1.5em}{2.3em}}\makeatother
\def\BState{\State\hskip-\ALG@thistlm}
\newcommand{\mU}{\mathcal{U}}
\newcommand{\mR}{\mathcal{R}}
\newcommand{\mS}{\mathcal{S}}
\newcommand{\mC}{\mathcal{C}}
\newcommand{\mP}{\mathcal{P}}
\newcommand{\mV}{\mathcal{V}}
\newcommand{\mN}{\mathcal{N}}
\newcommand{\mK}{\mathcal{K}}
\newcommand{\m}[1]{\mathcal{#1}}
\newcounter{remarkCounter}
\newcounter{probCounter}
\begin{document}
\title{Octopus: A Cooperative Hierarchical Caching Strategy for Cloud Radio Access Networks}


\author{Tuyen~X.~Tran~\IEEEmembership{Student Member,~IEEE,}and
Dario~Pompili~\IEEEmembership{Senior Member,~IEEE,}\\\IEEEauthorblockA{Department of Electrical and Computer Engineering\\
Rutgers University--New Brunswick, NJ, USA\\
E-mails: \{tuyen.tran, pompili\}@cac.rutgers.edu}
}

\maketitle

\thispagestyle{empty}


\begin{abstract}


Recently, implementing Radio Access Network (RAN) functionalities on cloud-based computing platform has become an emerging solution that leverages the many advantages of cloud infrastructure, such as shared computing resources and storage capacity, while lowering the operational cost. In this paper, we propose a novel caching framework aimed at fully exploiting the potential of such Cloud-based RAN (C-RAN) systems through cooperative hierarchical caching which minimizes the network costs of content delivery and improves users' Quality of Experience (QoE). In particular, we consider the cloud-cache in the cloud processing unit (CPU) as a new layer in the RAN cache hierarchy, bridging the capacity-performance gap between the traditional edge-based and core-based caching schemes. A delay cost model is introduced to characterize and formulate the cache placement optimization problem, which is shown to be NP-complete. As such, a low complexity, heuristic cache management strategy is proposed, constituting of a proactive cache distribution algorithm and a reactive cache replacement algorithm. Extensive numerical simulations are carried out using both real-world YouTube video requests and synthetic content requests. It is demonstrated that our proposed Octopus caching strategy significantly outperforms the traditional caching strategies in terms of cache hit ratio, average content access delay and backhaul traffic load.

\end{abstract}

\begin{IEEEkeywords}
Hierarchical caching; cooperative caching; cloud Radio Access Networks; C-RAN. 
\end{IEEEkeywords}

\section{Introduction}
\textbf{Overview:} Over the last few years, the raise of social networks (Facebook, Twitter, Instagram...), entertainment applications and multimedia content providers (YouTube, Netflix, etc.) has generated a burgeoning traffic demand on wireless mobile network. It is expected that mobile traffic will increase by 10-fold by 2018 while Content Delivery Networks (CDNs) account for $36\%$ of the Internet traffic \cite{cisco2012cisco}. This demand has fundamentally shifted from being the steady increase in traffic for connection-centric communications, such as phone calls and text messages, to the explosion of content-centric communications, such as video streaming and content sharing. The mobile cellular network architectures of today are, however, still designed with a connection-centric communication mindset. Moreover, the myriad technological advances proposed for beyond 4G and 5G mobile networks are mostly geared towards capacity increase, which is fundamentally constrained by the limited radio spectrum resources as well as the diminishing investment efficiency for operators. In order to support massive content delivery in an affordable way, improving network capacity alone is not sufficient and has to be accompanied with innovations at higher layers.

In today's mobile networks, computing and storage capabilities are already ubiquitous, both at the Base Stations (BSs) and on user devices themselves. In-network caching, which proactively stores popular contents at the network nodes (preferably close to the users), has therefore become a very promising solution to reduce latency and network costs of content delivery. When a user requests for a content that is cached in a nearby network node, that network node can directly provide the content to the user, rather than downloading it from the origin server in the CDN. 

A caching system in RAN could also leverage the current trends towards a Cloud-based Radio Access Network (C-RAN)~\cite{whitepaper13}. In C-RAN, the computational functionalities of the BSs are implemented in a common cloud processing unit (CPU) which can be hosted in a small data center. The centralized nature of C-RAN enables highly dynamic management of on-demand resource allocation~\cite{pompili2015dynamic, pompili15commag} and collaborative communications~\cite{tran2015mass, tran2016secon}. Additionally, the CPU with strong computing resources and storage capacity can provide a central port for traffic offload and content management to handle growing Internet traffic from mobile users. This directly translates into Capital Expenditure (CapEx) and Operational Expenditure (OpEx) reduction as well as user experience improvement \cite{sarkissian2012business}.

In this paper, we leverage C-RAN architecture and propose a novel cooperative hierarchical caching strategy, Octopus, with distributed \emph{edge-caches} at the BSs and the intermediate \emph{cloud-cache} at the CPU. The deployment of edge-caches and cloud-cache are complementary and interoperable. The overall system design and optimization solution for such caching systems, which could involve researches at multiple layers, is a complex problem and would go well beyond the scope of this paper. Here, assuming that the information about content popularity is available, we focus on the cache placement optimization problem which addresses the questions of \emph{what} and \emph{where} to place the contents among the cache nodes. In particular, we identify an efficient cooperative in-network caching strategy aiming at minimizing the total expected cost of content delivery.

\begin{figure}
 \centering
\includegraphics[scale = 0.65]{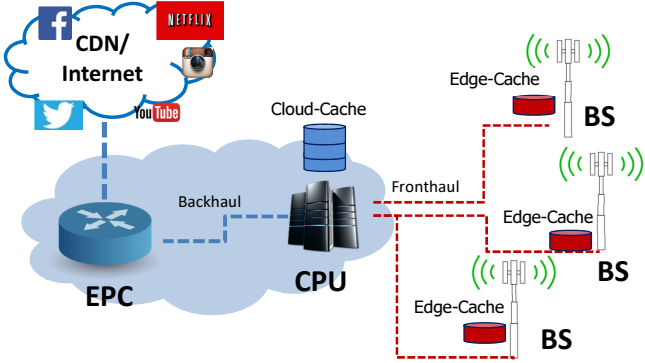} 
\caption{Illustration of a C-RAN caching system where the cloud-cache is deployed at the CPU and the edge-caches are deployed at the BSs.}\label{fig:cran_nw}
\end{figure}



\textbf{Related Works:} Recently, some prior works on content caching in cellular networks have proposed to deploy caches at the edges of the RAN (i.e., the BSs) \cite{bastug2014living, ahlehagh2014video, golrezaei2012femtocaching}. In \cite{bastug2014living}, the authors propose to alleviate backhaul congestion via proactive caching at the small cell BSs, whereby files are proactively cached during off-peak hours based on file popularity and correlations among users and file patterns. In \cite{golrezaei2012femtocaching}, the notion of femtocaching is introduced, in which the femtocell-like BSs are used to form a distributed caching network that assists the macro BS to handle requests of popular files that have been cached. The authors also consider the case that each mobile user can access multiple caches, and present approximation algorithms  to address the distributed cache assignment problem. In \cite{Gharaibeh2015online}, the authors also consider cooperative caching, where each user can accesss multiple caches from neighboring BSs, and propose an online caching algorithm that does not require prior knowledge about the content popularities in order to address the problem of minimizing the total cost paid by the content providers. The work in \cite{ahlehagh2014video} utilizes User Preference Profiles (UPPs) of active users in a cell to derive RAN-aware reactive and proactive video caching policies.

While offering great potential to bring popular content closer to the users, the aforementioned caching schemes only rely on the deployment of edge-caches. Hence, due to limited cache size at the BSs (compared to the very large amount of popular content), these \emph{edge-only} caching schemes suffer from high cache miss ratio. To compensate for the relative small cache size at the BSs, the authors in~\cite{wang2014cache} consider caches both in the RAN edge and in the Evolved Packet Core~(EPC). Along this line, the techniques in \cite{ahlehagh2014video} are further extended to a hierarchical caching scheme in \cite{ahlehagh2012hierarchical} where the gateways in the EPC also have video caches. 
While it is possible to implement relatively large cache size at the EPC to improve the cache hit ratio, fetching content from EPC to the BSs still undertake considerable delay due to the involvement of multiple intermediate network components.

{\bf{Our Contributions:}}
Unlike existing approaches, we consider the deployment of edge-caches in cooperation with the additional cloud-cache at the CPU in a C-RAN. Such cloud-cache presents a new layer in the RAN cache hierarchy, bridging the gap between the edge-based caching (small cache size, low access latency) and core-based caching schemes (large cache size, high access latency).


We formulate a cache management optimization problem aiming at minimizing the total average delay cost, subject to the cache size constraint at each node. We show that this is an NP-hard problem and propose a low-complexity, heuristic 
strategy involving a proactive cache distribution (PCD) algorithm and a reactive cache replacement algorithm (RCR). In particular, the PCD algorithm starts with an empty cache set, and incrementally places content files in the caches until they are all full. The PCD algorithm yields a solution with at least $\frac{1}{2}$ of the optimal value. Such solution is further improved by running the RCR algorithm which determines whether to replace a cached file with a new one when it is downloaded to the local RAN due to a cache miss. 

We carry out extensive numerical simulations using both the real world YouTube video request trace and the Zipf-based popularity model. It is demonstrated that Octopus significantly outperforms traditional caching deployment architectures and cache management algorithms in terms of cache hit ratio, average content access latency and backhaul traffic load. The reduction in backhaul traffic load and content access latency can be directly translated into sizable cost savings in both backhaul and transport OpEx as well as user experience improvement.

{\bf{Paper Organization:}}
The remainder of this paper is organized as follows: in Sect.~\ref{sec: model}, we present the system description and formulate the cache management optimization problem; in Sect.~\ref{sec:efficient_alg}, we propose the efficient algorithm for the cache management problem; performance evaluation via numerical simulations is presented in Sect.~\ref{sec:perform_eval} and, finally, Sect.~\ref{sec:conclusion} concludes the paper. 

\section{Caching System Model} \label{sec: model}
In this section, we describe the cooperative hierarchical caching system. The delay cost model is then introduced in order to formulate the cache management optimization problem.

\subsection{System Description}

Let us consider a C-RAN that consists of a set $\mR = \left\{ {1,2,...R} \right\}$ of $R$ BSs distributed in $R$ corresponding cells and a set $\mU = \left\{ {1,2,...U} \right\}$ of $U$ \emph{active} users. All the BSs are connected to a  common CPU via low-latency, high-bandwidth fronthaul links as illustrated in Fig.~\ref{fig:cran_nw}. The collection of files available for download is denoted as $\m{F} = \left\{ {f_1,f_2,...f_F} \right\}$, in which the size $s_i~[\rm{MB}]$ of each file is assumed to be the same, as considered in \cite{golrezaei2012femtocaching}. This assumption is mainly used for notational convenience, and could be easily lifted by considering a finer packetization, and breaking longer files into blocks of the same length.
We define the popularity distribution of the files as $\m{P} = \left\{ {{p_1},{p_2},...{p_F}} \right\}$, i.e., the probability of the $k$th file being requested from any user in the network is $p_k$. While predicting the file popularity is a challenging task in terms of accuracy and scalability, the recent advances in machine learning and data mining techniques have made significant progress on achieving this goal. Such techniques could involve analysing data from popular web sites, news papers and social networks to determine, around a specific BS, what kinds of contents people like, search for, and what are the consumer  profiles of these people.

We consider that each user $u$ only connects to and receives data from the nearest BS (in terms of signal strength), which we later refer to as the user $u$'s home BS, and denote $\mU_r \subseteq \mU$ as the set of users served by BS $r$. Further extension to the system employing Coordinated Multi-Point (CoMP) transmissions, where each user can be served by multiple BSs, will be explored in a separate work. We consider that each BS $r$ is equipped with an \emph{edge-cache}, denoted as $\mC_r$, with storage capacity of $M_r~[\rm{files}]$, and the CPU is equipped with a \emph{cloud-cache}, denoted as $\mC_0$, with storage capacity of $M_0~[\rm{files}]$ (usually $M_0 \gg M_r, r = 1,...R$). To describe the cache placement decision, i.e., which files stored in which caches, we define the cache placement ground set as, 
\begin{equation} \label{eq:ground_set}
\mV = \left\{ {{f_{10}},{f_{20}},...{f_{F0}},...{f_{1R}},{f_{2R}},...{f_{FR}}} \right\},
\end{equation}
where $f_{ir}$ denotes the copy of file $f_i$ in cache $\mC_r$. Note that the indexing of caches $\mC_r$'s, $r = 0,1,...R$, includes all the edge-caches and cloud-cache. In the subsequent analysis, unless otherwise stated, we will refer to the file $f_i$ and $f_{ir}$ interchangeably. The ground set $\mV$ can be partitioned into $R+1$ disjoint sets, $\mV_0,\mV_1,...\mV_R$, where ${\mV_r} = \left\{ {{f_{1r}},{f_{2r}},...{f_{Fr}}} \right\}$ is the set of all files that might be placed in the cache $\mC_r$. Hence, we can write $\mC_r \subseteq \mV_r$. A feasible cache placement decision, denoted as ${{\mC}} = \left\{ {{\mC_0},{\mC_1},...{\mC_R}} \right\}$, must satisfy the storage capacity constraints as follows,
\begin{equation}
\left| {{\mC_r}} \right| \le {M_r},\forall r = 0,1,...R.
\end{equation}


In the current 4G cellular network, the eNodeBs are inter-connected via the X2 interface that is designed for exchanging control information or users' data buffer during handover \cite{sesia2009lte, robson2012small}. While this X2 interface is available for limited data transfer, it cannot be exploited for inter-cache data transfer and hence the eNodeBs cannot share their cache contents directly. In contrast, the BSs in cloud-based RAN are all connected to the common CPU via high-bandwidth, low-latency CPRI links for user data transportation \cite{morant2015optical, chanclou2013optical, bartelt2015fronthaul}. This allows each BS to retrieve cache contents from the neighboring BSs  via a ``U-turn'' (BS-CPU-BS) using fronthaul links. Note that retrieving cache data from neighboring BSs is more latency- and cost-effective than fetching content from the original remote server in the CDN via the backhaul network \cite{wang2014cache, Gharaibeh2015online, pantisano2014network}. In this paper, we consider a \emph{cooperative hierarchical caching} strategy where the cloud-cache and edge-caches collaboratively form an ``octopus-like'' caching network connected via fronthaul links. This scheme fully exploits the extra degrees of cooperation brought by C-RAN to pool the resources and increase cache hit ratio, reducing outbound requests to the higher level network elements.

In the proposed system, we consider that there is a Central Cache Manager (CCM) implemented at the CPU to monitor all the requests generated from users within the local RAN, and is responsible to make cache (re)placement decision. In addition, leveraging the powerful processing capability at the CPU, one can implement sophisticated learning and prediction algorithms to estimate the content popularity information $\m{P}$. While the actual content files are physically stored in the separated caches, a global indexing table can be maintained by the CCM to facilitate content lookup and cache management. When a user $u \in \mU_r$ makes a request for the file $f_i$ that is already stored in the local edge-cache $\mC_r$, it can directly download the file $f_i$ from $\mC_r$ without incurring traffic on the fronthaul and backhaul links. If the requested file $f_i$ is not stored in the local edge-cache, the request is forwarded to the CCM at the CPU. Upon receiving the request for file $f_i$ from BS $r$, the CCM will firstly search for $f_i$ in the cloud-cache $\mC_0$, and then in the neighboring caches of BS $r$, i.e., $\mC_k$'s, $k \neq r$. Once $f_i$ is found in one of the caches, the CCM will direct the user to download the file directly from that cache via fronthaul links; otherwise the user will download the file from the origin server in the CDN, incurring traffic in the backhaul links. In Fig.~\ref{fig:octopus} we illustrate the overview of our proposed Octopus caching system with three use cases within which the requested content is retrieved from the local edge-cache (user 1), the neighbor edge-cache (user 2) and the cloud-cache (user 3), respectively.

Although the cache storage can be relatively large, the number of files that can be stored in each cache is limited compared to the total number of files available. When a user request for a file that is not available in the cache of its home BS, it has to retrieve the file from other places, incurring additional access delay and fronthaul (and possibly backhaul) bandwidth consumption. While the cost associated with access delay and bandwidth consumption are often proportional and interchangeable, we will focus on the cost model for content access delay as considered in \cite{golrezaei2012femtocaching, wang2014cache}. The reduction of such cost directly translates into users' quality of experience (QoE) improvement. It is therefore imperative to design an efficient cache management strategy so as to minimize the expected network cost and content access delay. A cache management strategy involves the cache distribution decision, i.e., placing files in the caches, and the cache replacement decision, i.e., updating existing files in the caches. In the following, we first introduce the cost model to characterize the cache management scheme and then formulate the underlying optimization problem. 

\begin{figure}
 \centering
\includegraphics[scale = 0.6]{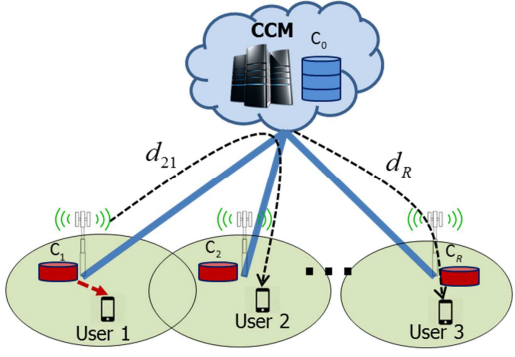} 
\caption{Illustration of Octopus caching system constituted of cloud-cache $\m{C}_0$ and edge-caches $\m{C}_1,...\m{C}_R$ which can share cached contents via fronthaul links. Example: requests from user 1 (in cell 1) and user 2 (in cell 2) are retrieved from edge-cache $\m{C}_1$, request from user 3 in cell $R$ is retrieved from cloud-cache.}\label{fig:octopus}
\end{figure}

Let $d_r$ denotes the delay cost of transferring a file from the cloud-cache to BS $r$ via fronthaul link, which we assume to be the same as the delay cost of retrieving that file from edge-cache $\mC_r$ to the CPU. Let $d_0$ denotes the delay cost incurred when a user downloads a file from the original server in the CDN. Furthermore, we assume the cost of transferring a file from cache of BS $k$ to BS $r$ is $d_{rk}$. In practice, $d_0$ is usually of many-fold higher than $d_r$ and $d_{rk}$ \cite{pantisano2014network, wang2014cache, gkatzikis2015clustered}. This makes it cost-effective to retrieve content from the in-network caches whenever possible rather than downloading them from the CDN. We consider that the incurred delay cost of a user downloading a file directly from its home BS's cache is zero. In order to formulate the cache management problem, let us define the following 0-1 decision variables for a given cache placement decision $\mC$ as follows, 
\begin{equation} \label{eq:c_ir}
{c_{ir}}(\mC) = \left\{ {\begin{array}{*{20}{l}}
1&{f_{ir} \in {\mC_r}},\\
0&{{\rm{otherwise,}}}
\end{array}} \right. 
\end{equation}
\begin{equation}
x_{ir}^k = \left\{ {\begin{array}{*{20}{l}}
1& \text{if request for file } i {\text{ from BS }}r \text{ is retrieved}\\ &{\text{ from cache }}{\mC_k}, k \neq r, k \in \left\{ {0,1,...R} \right\}, \\
0&{{\rm{otherwise,}}}
\end{array}} \right.
\end{equation}
\begin{align}
x_{ir}^{R+1} = \left\{ {\begin{array}{*{20}{l}}
1& \text{if request for file } i {\text{ from BS }}r \text{ is retrieved} \\ &{\text{from the CDN}}, \\
0&{{\rm{otherwise,}}}
\end{array}} \right. \\ \nonumber
\,\, \forall i \in \m{F}, \, r \in \mR.
\end{align}
Since each request should only be downloaded from the nearest possible place (having lowest cost), we impose the following constraint,
\begin{equation}
\sum\nolimits_{k = 0}^{R + 1} {x_{ir}^k}  = 1,\forall r = 1,...R.
\end{equation}
For a given cache placement decision ${{\mC}}$ and the popularity distribution $\m{P}$, we can calculate the average delay cost of user $u\in \mU_r$ as,
\begin{equation} \label{eq:delay_u}
{{\bar D}_u} = \sum\limits_{i \in \m{F}} {{p_i}(x_{ir}^0{d_r} + \sum\limits_{k \in \m{R}\backslash r} {x_{ir}^k{d_{rk}}}  + x_{ir}^{R + 1}{d_0}).} 
\end{equation}
Using this cost model, the cache management optimization problem is formulated in the next subsection.

\subsection{Problem Formulation}
Given the content popularity distribution $\mP$ and the constrained storage capacities of the caches, we wish to find an efficient cache management strategy in order to minimize the total average delay cost of all the users in the network. In particular, we consider a \emph{dynamic} cache management strategy that involves proactively distributing content files in the caches and reactively updating the cached files. Notice that multiple copies of the same file can be stored at different caches. The underlying optimization problem to realize the proposed strategy can be formulated as follows,
\begin{subequations} \label{eq:prob0}
\begin{align} \label{eq:prob0_a}
\mathop {\min }\limits_\mC  \hspace{0.5cm} &\sum\limits_{u \in \m{U}} {{\bar{D}_u}},\\
\label{eq:prob0_b}
\text{s.t.} \hspace{0.5cm} &\sum\limits_{i\in \m{F} } {{c_{ir}(\mC)}}  \le {M_r}, \hspace{0.3cm} \forall r = 0,1,...R, \\\label{eq:prob0_c}
&\sum\limits_{k = 0}^{R + 1} {x_{ir}^k}  = 1,\,\,\forall r = 1,...R,\\ \label{eq:prob0_d}
&x_{ir}^k \le c_{ik}(\mC), \,\,\forall r=1,...R, \, k = 0,1,...R.
\end{align}
\end{subequations}
with ${{\bar{D}_u}}$ given in (\ref{eq:delay_u}). The objective function in (\ref{eq:prob0_a}) represents the total average delay cost experienced by all the users in the network. The constraint in (\ref{eq:prob0_b}) imposes the cache storage capacities and the constraint in (\ref{eq:prob0_d}) ensures that a content file can be retrieved from a cache only if it has been stored in that cache. From constraint (\ref{eq:prob0_c}), we can substitute $x_{ir}^{R+1}$ by $1 - \sum\nolimits_{k = 0}^R {x_{ir}^k} $ into (\ref{eq:delay_u}), and problem (\ref{eq:prob0}) can be recast as a problem of maximizing the average delay cost reduction, expressed as,
\begin{subequations} \label{eq:prob}
\begin{align} \label{eq:prob_a}
\hspace{-0.2cm} \mathop {\max }\limits_\mC   \hspace{0.1cm} &\sum\limits_{u \in \m{U}} {\sum\limits_{i \in \m{F}} {{p_i}\Bigl( {x_{ir}^r{d_0} + x_{ir}^0\left( {{d_0} - {d_r}} \right) + \sum\limits_{k \in \m{R} \backslash r } {x_{ir}^k\left( {{d_0} - {d_{rk}}} \right)} } \Bigr)} },\\
\label{eq:prob_b}
\hspace{-0.2cm} \text{s.t.} \hspace{0.1cm} &\sum\limits_{i \in \m{F} } {{c_{ir}(\mC)}}  \le {M_r}, \hspace{0.3cm} \forall r = 0,1,...R,  \\ \label{eq:prob_c}
&\sum\nolimits_{k = 0}^R {x_{ir}^k}   \le 1,\,\,\forall r = 1,...R, \, i = 1,...F,\\ \label{eq:prob_d}
&x_{ir}^k \le c_{ir}(\mC), \,\,\forall r=1,...R, \, k = 0,...R, \, i = 1,...F.
\end{align}
\end{subequations}

The objective function in (\ref{eq:prob}) can be seen as the sum of \emph{utility value} seen by each user and our goal here is to maximize the sum utility value seen by all users. It can be shown that this problem is NP-complete (please refer to Appendix) and global optimal solution usually possesses exponential computational complexity which is impractical to implement. Therefore, our approach aims for a low-complexity, suboptimal solution that can be implemented in practical system. In particular, we will show that problem (\ref{eq:prob}) belongs to the classical class of problems of maximizing a \emph{monotone submodular function} over a \emph{matroid} constraint\cite{calinescu2011maximizing, nemhauser1978analysis}. We then propose a greedy cache management solution for problem (\ref{eq:prob}) consisting of a cache distribution algorithm and a backtracking cache replacement algorithm.

\section{Proposed Approach} \label{sec:efficient_alg}
We start this section by presenting some essential background material and intuition of our approach for the cache management problem. We then present the description of our proposed cache distribution and replacement algorithms.

\subsection{Preliminaries}

In the following, we provide the basic definitions of matroids and submodular functions \cite{wolsey2014integer}, which will be used in the analysis in the next subsection.

{\bf{Matroids.}} A \emph{matroid} is a pair $\left( {\mV,\m{I}} \right)$ such that $\mV$ is a finite set, and $\m{I} \subseteq 2^{\mV}$ is a collection of subsets of $\mV$ satisfying the following two properties
\begin{itemize}
\item $\m{I}$ is downward closed, i.e., if $A \subseteq B \subseteq \mV$ and $B \in \m{I}$ then $A \in \m{I}$.
\item If $A, B \in \m{I}$ and $\left| A \right| < \left| B \right|$, then there exists $ e \in A\backslash B$ such that $B \cup \left\{e \right\} \in \m{I}$.
\end{itemize}

Matroids generalize the concept of linear independence found in linear algebra to general sets, and sets in $\m{I}$ described above are called \emph{independence}. One of the important applications of matroids is the concept of matroid constraint defined via the \emph{partition matroid}. Consider a finite ground set $\mV$ that is partitioned into $n$ disjoint sets $\mV_1, \mV_2,...\mV_n$ with associated integers  $m_1,m_2,...m_k$, a partition matroid $\m{I}$ is given as,
\begin{equation} \label{eq:matroid_def}
\m{I} = \left\{ {A \subseteq \mV:\left| {A \cap {\mV_i}} \right| \le {m_i},\forall i = 1,...n} \right\}.
\end{equation}

{\bf{Subodular functions.}} Consider a finite ground set $\mV$, a set function $g:{2^\mV} \to \mathbb{R}$ is submodular if for all sets $A,B \subseteq \mV$,
\begin{equation}
g\left( A \right) + g\left( B \right) \ge g\left( {A \cup B} \right) + g\left( {A \cap B} \right).
\end{equation}
Given a submodular function $g:{2^\mV} \to \mathbb{R}$ and $A,S \subset \mV$, the function $g_A$ defined by ${g_A}\left( S \right) = g\left( {A \cup S} \right) - g\left( A \right)$ is also submodular, and if $g$ is monotone then $g_A$ is also monotone. For $i \in \m{V}$, we abbreviate $A \cup \{i \}$ by $A + i$. Let ${g_A}\left( i \right) = g\left( {A + i} \right) - g\left( A \right)$ denote the marginal value of an element $i\in \mV$ with respect to the subset $A \subseteq \mV$. Then, $g$ is submodular if for all $A \subseteq B \subseteq \mV$, and for all $i\in \mV \backslash B$ we have,
\begin{equation}
{g_A}\left( i \right) \ge {g_B}\left( i \right).
\end{equation}
Intuitively, submodular functions capture the concept of diminishing returns: as the set becomes larger the benefit of adding a new element to the set will decrease. The function $g$ is monotone if for $A \subseteq B \subseteq \mS$, we have $g\left( A \right) \le g\left( B \right)$. 


\subsection{Proposed Algorithms}
We exploit the special structure of problem (\ref{eq:prob}) to formulate it as the problem of maximizing a submodular function subject to matroid constraints. In particular, motivated by the approach in \cite{golrezaei2012femtocaching}, we will show that the constraints in (\ref{eq:prob}) can be expressed as the independent sets of a matroid and the objective function can be expressed as a monotone submodular function.

Firstly, recall that every cache placement decision $\mC$ is a subset of the ground set $\mV$ defined in (\ref{eq:ground_set}), and we have $\mC_r = \mC \cap \mV_r$. With this position, the cache capacity constraints in (\ref{eq:prob_b}) are equivalent to the condition $\mC \subseteq \m{I}$, where,
\begin{equation} \label{eq:matroid_const}
\m{I} = \left\{ {\mC \subseteq \mV:\left| {\mC \cap {\mV_r}} \right| \le {M_r},\forall r = 0,1,...R} \right\}.
\end{equation}

From (\ref{eq:matroid_def}) and (\ref{eq:matroid_const}) we can see that our constraints form a partition matroid $\m{M} = \left(\mV, \m{I} \right)$. In addition, notice from (\ref{eq:c_ir}) that the set $\left\{ {{c_{ir} (\mC)}:i \in \m{F}} \right\}$ can be considered as the Boolean representation of $\mC_r$. We now have the following Lemma:

\begin{lemma} The objective function in (\ref{eq:prob_a}) is a monotone submodular function.
\end{lemma}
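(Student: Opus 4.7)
The plan is to first reduce the objective in~(\ref{eq:prob_a}) to a function of the placement $\mC$ alone, by optimizing out the routing variables $x_{ir}^k$. For any fixed $\mC$, constraints~(\ref{eq:prob_c})--(\ref{eq:prob_d}) allow each user--file pair to route through at most one cache $\mC_k$ containing $f_i$; since the CDN delay $d_0$ dominates each in-network cost, all three reward coefficients $d_0$, $d_0-d_r$, and $d_0-d_{rk}$ are nonnegative. Hence the optimal $x$ routes every request to whichever available cache yields the largest saving (or to the CDN, for zero saving, when no cache holds $f_i$), so the objective reduces to
\begin{equation*}
G(\mC) \,=\, \sum_{r \in \mR} \sum_{u \in \mU_r} \sum_{i \in \m{F}} p_i \, \max\!\Bigl\{0,\; \max_{k:\, f_{ik} \in \mC} w_{u,i,k}\Bigr\},
\end{equation*}
where $w_{u,i,r} = d_0$, $w_{u,i,0} = d_0-d_r$, and $w_{u,i,k}=d_0-d_{rk}$ for $k \notin \{0,r\}$. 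It is this set function $G$ that I would show to be monotone submodular.

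Next I would invoke the standard fact that for any nonnegative weight $w:\mV \to \mathbb{R}_{\ge 0}$, the set function $h(S) = \max_{e \in S} w(e)$ with $h(\emptyset)=0$ is both monotone and submodular. Monotonicity is immediate, since enlarging $S$ enlarges the pool of candidates. For submodularity, fix $A \subseteq B \subseteq \mV$ and $e \notin B$; then $h_A(e) = \max\{0,\, w(e) - h(A)\} \,\ge\, \max\{0,\, w(e) - h(B)\} = h_B(e)$, because $h(A) \le h(B)$.

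To conclude, for each triple $(r,u,i)$ I would extend $w_{u,i,k}$ to a function on all of $\mV$ by setting it to zero off the copies of $f_i$; the per-triple reward is then exactly $p_i$ times $h(\mC)$ for this weight, hence monotone and submodular. Summing these nonnegative monotone submodular contributions preserves both properties and yields $G$, establishing the lemma. The main technical care is the first step: justifying the substitution of $x$ by appealing to the separability of~(\ref{eq:prob_a}) across $(u,i)$ once $\mC$ is fixed, together with the nonnegativity of the savings ensured by the hierarchy $d_0 > d_r,\, d_{rk}$. After that, the submodularity argument is a textbook application of the max-of-weights construction.
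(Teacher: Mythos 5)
Your proof is correct and follows essentially the same route as the paper's: both reduce the objective, for a fixed placement, to a per-(user, file) reward of the form $p_i \max_{k:\, f_{ik} \in \mC} t_{ir}^k$ with nonnegative weights $d_0$, $d_0-d_r$, $d_0-d_{rk}$, and then show this max-of-weights set function is monotone submodular. The only differences are cosmetic: you compress the paper's three-case analysis of the marginal value into the single identity $h_A(e) = \max\{0,\, w(e)-h(A)\}$, and you are more explicit than the paper about justifying the elimination of the routing variables $x_{ir}^k$ and about the empty-set convention.
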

\begin{proof}
For each file $f_i \in \m{F}$ and cache $\mC_r, r= 0,1,...R$, we introduce the new variables $t_{ir}^k$'s as: $t_{ir}^r = d_0$, $t_{ir}^0 = d_0 - d_r$, $t_{ir}^k = d_0 - d_{rk}$, $\forall k \in \mR\backslash r$. The objective function in (\ref{eq:prob_a}) can be expressed as,
\begin{equation}
\sum\limits_{r \in \mR} {\sum\limits_{u \in {\mU_r}} {\sum\limits_{i \in \m{F} } {{p_i}\tilde D_u^i} } },
\end{equation}
\begin{equation} \label{eq:Dt_ui}
\hspace*{-3.5cm} \text{where} \hspace{2cm} \tilde D_u^i = \sum\limits_{k = 0}^R {x_{ir}^kt_{ir}^k}.
\end{equation}

Since sum of monotone submodular functions is monotone submodular, it is enough to prove that for a user $u \in \mU_r$, the set function ${g_u}\left( \mC \right) = \tilde D_u^i$ is monotone submodular. Firstly, notice that from (\ref{eq:Dt_ui}), we have,
\begin{equation} \label{eq:tmax}
{g_u}\left( {\mC} \right) = \mathop {\max }\limits_{k \in \left\{ {0,1,...R} \right\}} t_{ir}^k\,\,\,{\rm{s}}{\rm{.t}}{\rm{.}}\,\,\,\,{c_{ik}}\left( \mC \right) = 1, \forall \mC \subseteq \mV.
\end{equation}

For a new file $f_{in} \in \mV \backslash \mC$, let ${\mC_{in}} = \mC + {f_{in}}$. It is straightforward to verify that ${g_u}\left( \mC_{in} \right) \geq {g_u}\left( \mC \right)$ and, therefore, ${g_u}\left( \mC \right)$ is a monotone function $\forall \mC \subseteq \mV$. Intuitively, adding a new file to a cache placement set cannot decrease the value of the set function. 

Let us now consider another cache placement set (decision) $\m{K}$ such that $\m{K} \subseteq \m{C}$. Denote ${\m{K}_{in}} = \m{K} + {f_{in}}$, we have ${g_u}\left( {\m{K}} \right) = t_{ir}^{\left( \m{K} \right)}$ and ${g_u}\left( \m{K}_{in} \right) = t_{ir}^{(\m{K}_{in})}$. Since ${g_u}\left( . \right)$ is monotone, we have,
\begin{equation} \label{eq:ine_monotone}
{g_u}\left( \mC \right) \ge {g_u}\left( \mK \right).
\end{equation}
The marginal value of adding the file $f_{in}$ to the sets $\mC$ and $\mK$ can be expressed, respectively, as,
\begin{equation}
{g_{u,\mC}}\left( {{f_{in}}} \right) = {g_u}\left( {\mC_{in} } \right) - {g_u}\left( \mC \right),
\end{equation}
\begin{equation}
{g_{u,\mK}}\left( {{f_{in}}} \right) = {g_u}\left( {\mK_{in} } \right) - {g_u}\left( \mK \right).
\end{equation}
In order to prove that ${g_u}\left( . \right)$ is submodular we need to show that ${g_{u,\mK}}\left( {{f_{in}}} \right) \ge {g_{u,\mC}}\left( {{f_{in}}} \right)$, or equivalently, that $\Delta _u^{in} = {g_{u,\mK}}\left( {{f_{in}}} \right) - {g_{u,\mC}}\left( {{f_{in}}} \right) \ge 0$. Using (\ref{eq:tmax}), we now distinguish three cases below,
\begin{itemize}
\item [(i)] $t_{ir}^n > {g_u}\left( \mC \right)$: We have ${g_u}\left( {{\mC_{in}}} \right) = {g_u}\left( {{\mK_{in}}} \right) = t_{ir}^n$. Thus, $\Delta _u^{in} = {g_u}\left( \mC \right) - {g_u}\left( \mK \right) \ge 0$, which stems from the inequality in (\ref{eq:ine_monotone}).

\item [(ii)] ${g_u}\left( \mK \right) \le t_{ir}^n \le {g_u}\left( \mC \right)$: In this case we have ${g_u}\left( {{\mC_{in}}} \right) = {g_u}\left( \mC \right)$ and ${g_u}\left( {{\mK_{in}}} \right) = t_{ir}^n$. Therefore, $\Delta _u^{in} = t_{ir}^n - {g_u}\left( \mK \right) \ge 0$.

\item [(iii)] $t_{ir}^n < {g_u}\left( \mK \right)$: In this case, adding $f_{in}$ does not provide any added value. We have ${g_u}\left( {{\mC_{in}}} \right) = {g_u}\left( \mC \right)$ and ${g_u}\left( {{\mK_{in}}} \right) = {g_u}\left( \mK \right)$. Thus $\Delta _u^{in} = 0$.
\end{itemize}

In summary, we always have $\Delta _u^{in} \geq 0$, which implies that ${g_u}\left( . \right)$ is submodular function in $\mV$. The proof is completed.
\end{proof}

A popular approach for the problem of maximizing a monotone submodular function subject to a matroid constraint is to use a greedy algorithm \cite{calinescu2011maximizing, nemhauser1978analysis}. Based on the result from Lemma~1, we can extent such algorithm to solve our problem in (\ref{eq:prob}). Our proposed solution consists of two phases: first, the content files are proactively distributed to the caches following the \emph{proactive cache distribution} algorithm; second, every time there is a cache miss and a new file is downloaded from the content server, the CCM will decide whether to replace this file with an existing ones in the caches following the \emph{reactive cache replacement} algorithm.

{\bf{Proactive cache distribution (PCD).}} The PCD algorithm incrementally builds a placement solution starting with the empty cache placement set. In each iteration it adds a new file with the highest marginal value to the cache placement set, until all the caches are full. Since the objective function is submodular, the marginal value of a new file decreases as the cache placement set grows bigger. We outline the procedure of the greedy PCD algorithm below.

\begin{algorithm} 
\caption{Proactive Cache Distribution.}\label{alg:pcd}
\renewcommand{\Statex}{\item[\hphantom{\bfseries Step \arabic{ALG@line}.}]}
\begin{algorithmic}[1]
\State Initialize: ${\mV_r} = \left\{ {{f_{1r}},{f_{2r}},...{f_{Fr}}} \right\}$, $\mC_r = \emptyset$, $r = 0,1,...R$, \par 
\hspace{0.8cm}$\mV = \left( {{\mV_0},{\mV_1},...{\mV_R}} \right)$, $\mC = \left( {{\mC_0},{\mC_1},...{\mC_R}} \right)$.

\Repeat
	\State ${f_{j'r'}} = \mathop {\arg \max }\limits_{{f_{jr}} \in \mV\backslash \mC} \left[ {g\left( {\mC + {f_{jr}}} \right) - g\left( \mC \right)} \right]$
	\State $\mC \leftarrow \mC + {f_{j'r'}}$
	\If{$\left| {{\mC_{r'}}} \right| = {M_{r'}}$} 
		$\mV \leftarrow \mV \backslash \mV_{r'}$
	\EndIf
\Until {$\mV = \emptyset$}
\State Output: $\mC$
\end{algorithmic}
\end{algorithm}

Step~3 of Algorithm~1 identifies the placement of file $j'$ in cache $\m{C}_{r'}$, denoted by $f_{j'r'}$, that provides the highest marginal value when adding to the current cache placement set $\m{C}$. Hence, $f_{j'r'}$ can be seen as the next best cache placement among the unplaced files $\left\{ {f_{jr} \in \m{V}\backslash \m{C}} \right\}$. In Algorithm~\ref{alg:pcd}, there would be $\sum\limits_{r = 0}^R {{M_r}}$ iterations until all the caches are full. Each iteration involves calculating the marginal value of at most $(R+1)F$ elements that have not been included in the cache set. Evaluating each marginal value would take $\m{O}(U)$ time. Hence, the running time would be $\m{O}\left( {\left( {R + 1} \right)FU\sum\limits_{r = 0}^R {{M_r}} } \right)$. When $M_r$ is a constant fraction of $F$, $\forall r =0,...R$, the time complexity is $\m{O}\left( {{{\left( {R + 1} \right)}^2}{F^2}U} \right)$. It has been shown that the greedy algorithm achieves a ratio of at least $\frac{1}{2}$ of the optimal value \cite{nemhauser1978analysis}.

{\bf{Reactive cache replacement (RCR).}} The PCD  algorithm described above initializes the cache distribution, which can be done during off-peak traffic hours (e.g., night-time) to utilize the unused backhaul bandwidth. Over the course of the day, following each cache miss, a new file will be downloaded from the remote content server to the BSs and delivered to the requesting user. As all the caches are full already, the CCM will decide to replace this new file with existing files in the caches only if such replacement could improve the value of the objective function. This approach ensures that the up-to-date cache placement set always yields the highest marginal value. The RCR algorithm is shown in Algorithm~\ref{alg:rcr}.
\begin{algorithm} 
\caption{Reactive Cache Replacement}\label{alg:rcr}
\renewcommand{\Statex}{\item[\hphantom{\bfseries Step \arabic{ALG@line}.}]}
\begin{algorithmic}[1]
\State For a new file request ${f_{i}} \notin \mC$
\For{$t = 0:R$}
	\State ${f_{j'r'}} = \mathop {\arg \min }\limits_{{f_{jr}} \in \mC} \left[ {g\left( \mC \right) - g\left( {\mC - {f_{jr}}} \right)} \right]$
	\If{$g\left( {\mC - {f_{j'r'}} + {f_{ir'}}} \right) > g\left( \mC \right)$}
		\State $\mC \leftarrow \mC - f_{j'r'} + f_{ir'}$
	\Else { Break}
	\EndIf
\EndFor
\State Output: $\mC$
\end{algorithmic}
\end{algorithm}

Step 3 of Algorithm~\ref{alg:rcr} evaluates the current utility value of every file in the cache placement set, which is computed by the utility value lost by removing that file from the current cache placement set. After that, Step 4 determines whether the new file has utility value greater than the least utility value of an existing file, and if so, replaces the new file with this least valued file. The running time of each iteration in Step 3 is $\m{O}\left( {U\sum\limits_{r = 0}^R {{M_r}} } \right)$. When $M_r$ is a constant fraction of $F$, the overall time complexity of the RCR algorithm is $\m{O}\left( {{{\left( {R + 1} \right)}^2}FU} \right)$

\begin{figure*}[t]
 \centering
 \begin{tabular}{ccc}
\hspace*{-.3cm}\includegraphics[scale = .6]{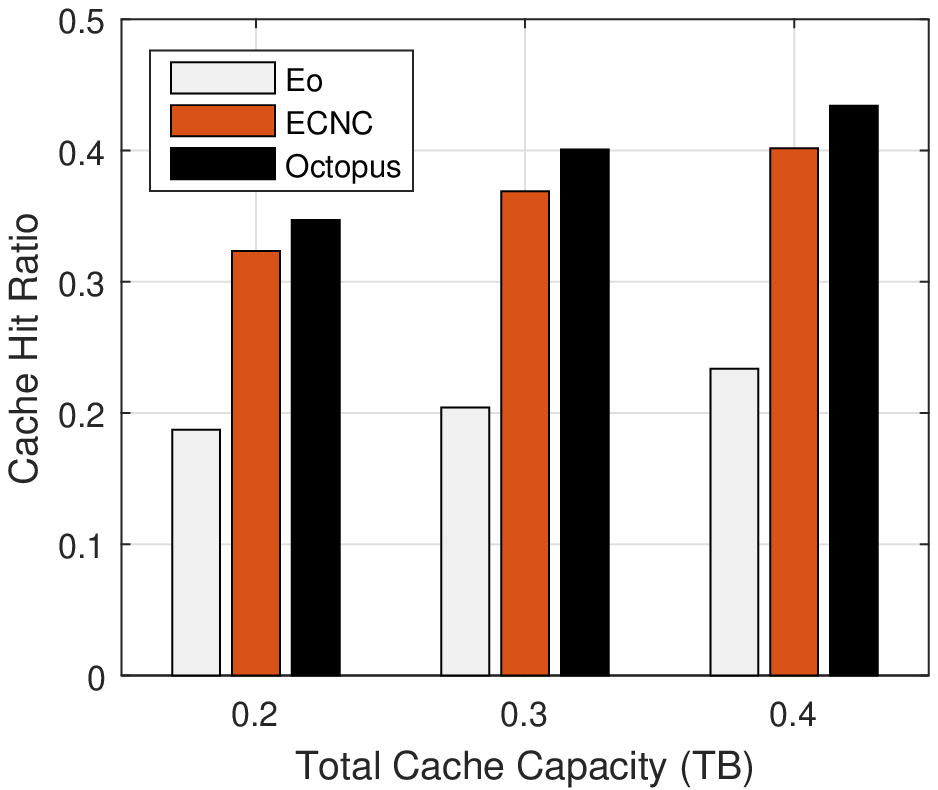} &
\hspace*{-.6cm}\includegraphics[scale = .6]{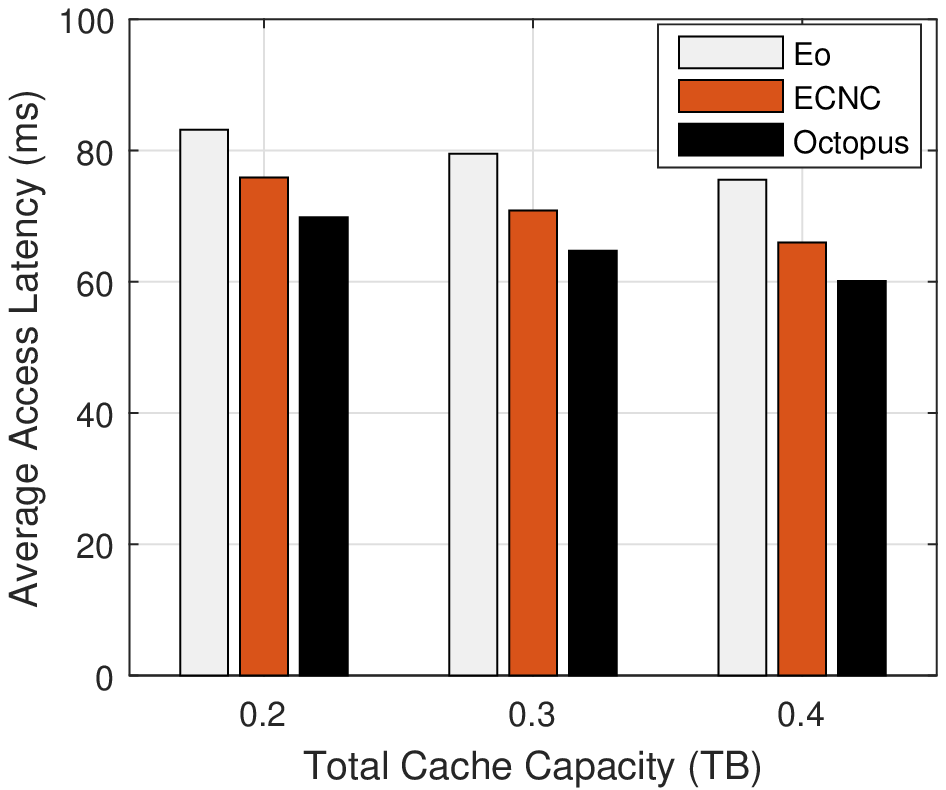} &
\hspace*{-.6cm}\includegraphics[scale = .6]{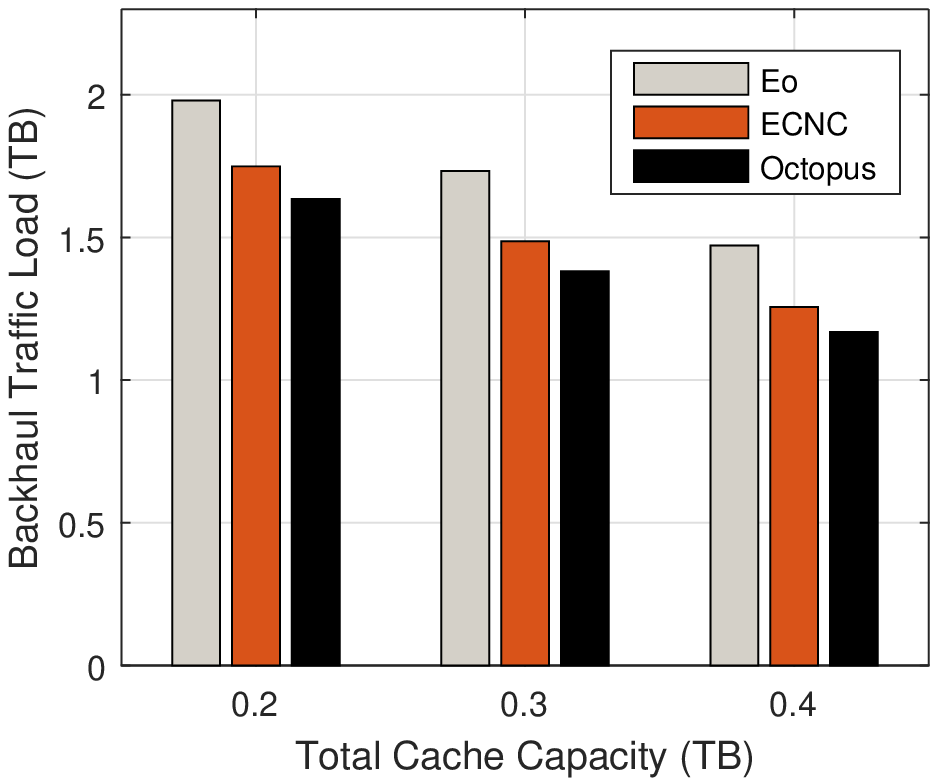} \\
 \small(a) & \small(b) & \small(c)
\end{tabular}
\caption{Performance comparison of different caching architectures: \emph{Eo} - the edge-only system with edge-caches at the BSs only; \emph{ECNC} - a non-cooperative hierarchical caching system with edge-caches and cloud-cache where each entity makes independent caching decision; \emph{Octopus} - the proposed cooperative hierarchical caching system.
}
\label{fig:cache_architecture}
\end{figure*}
\section{Performance Evaluation} \label{sec:perform_eval}

In this section, we present numerical simulations to evaluate the performance of the proposed Octopus caching system. We simulated a C-RAN system with $7$ cells, each having one BS, and mobile users are uniformly distributed in the cells. Unless otherwise stated, the simulation results are based on the YouTube request trace data collected on the University of Massachusetts' Amherst campus during the day $03/12/2008$~\cite{you2be_umass}. We consider the content files being the requested videos and the video popularities are extracted directly from the trace, which consists of $19,777$ users, $77,414$ videos and $122,280$ requests. It is assumed that the backhaul and fronthaul links' capacities as well as radio resources in the access network are sufficiently provisioned to handle all the generated traffic requests. The e2e latency of video delivery from the CDN to the CPU and from the CPU to BSs are randomly assigned, uniformly distributed in the ranges $\left[ {60 \div 100{\rm{ms}}} \right]$ and $\left[ {10 \div 30{\rm{ms}}} \right]$, respectively. In addition, we consider that the size of each video is $20~\rm{MB}$ and that $M_0 = 4M_r, \forall r = 1,...R$. 

%
%

We evaluate the considered caching schemes using three popular metrics: \emph{(i) cache hit ratio}: the fraction of requests that can be retrieved from one of the caches; \emph{(ii) average access delay $[\rm{ms}]$}: average latency of the contents traveling from the caches or the CDN server to the requesting user; \emph{(iii) backhaul traffic load $[\rm{TB}]$}: the volume of traffic going through the backhaul network due to users downloading contents from the CDN servers.

\subsection{Advantages of Cooperative Cloud-Cache}
In this subsection, we evaluate the benefit of using the cloud-cache cooperatively with the distributed edge-caches. Specifically, we compare the performance of the proposed Octopus caching system with the two traditional schemes below.


{\bf{Edge-only (Eo)}}: in this scheme, popular files are cached at the edge-caches only. If the requested file from a mobile user is found in the cache of its home BS, the file will be downloaded immediately from the cache; otherwise, it will be fetched from the CDN server.

{\bf{Edge+Cloud non-cooperative (ECNC)}}:  this is a hierarchical caching scheme where content placement decisions at the cloud-cache and edge-caches are made independently. A file request resulting in a cache miss in an edge-cache will be searched in the cloud-cache and finally goes out to the CDN. This scheme differs from Octopus in the sense that a user served by one BS cannot access the caches of other BSs. We employ the simplified version of PCD and RCR algorithms in Octopus to ECNC.

\begin{figure*}[t]
 \centering
 \begin{tabular}{ccc}
\hspace*{-.3cm}\includegraphics[scale = .6]{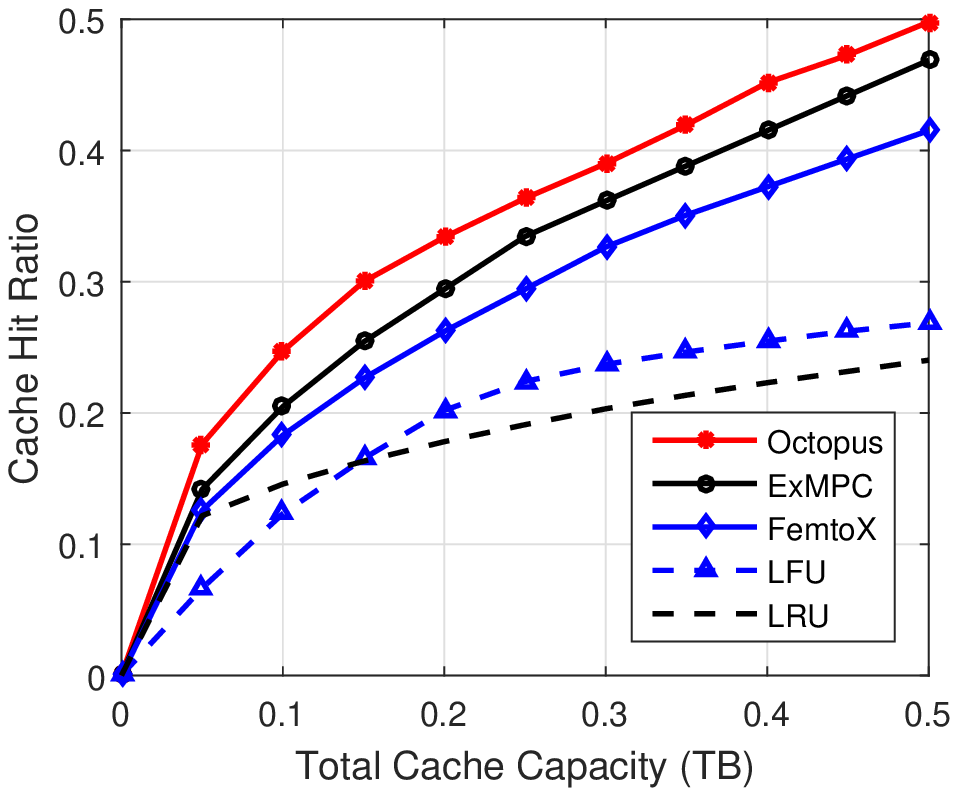} &
\hspace*{-.6cm}\includegraphics[scale = .6]{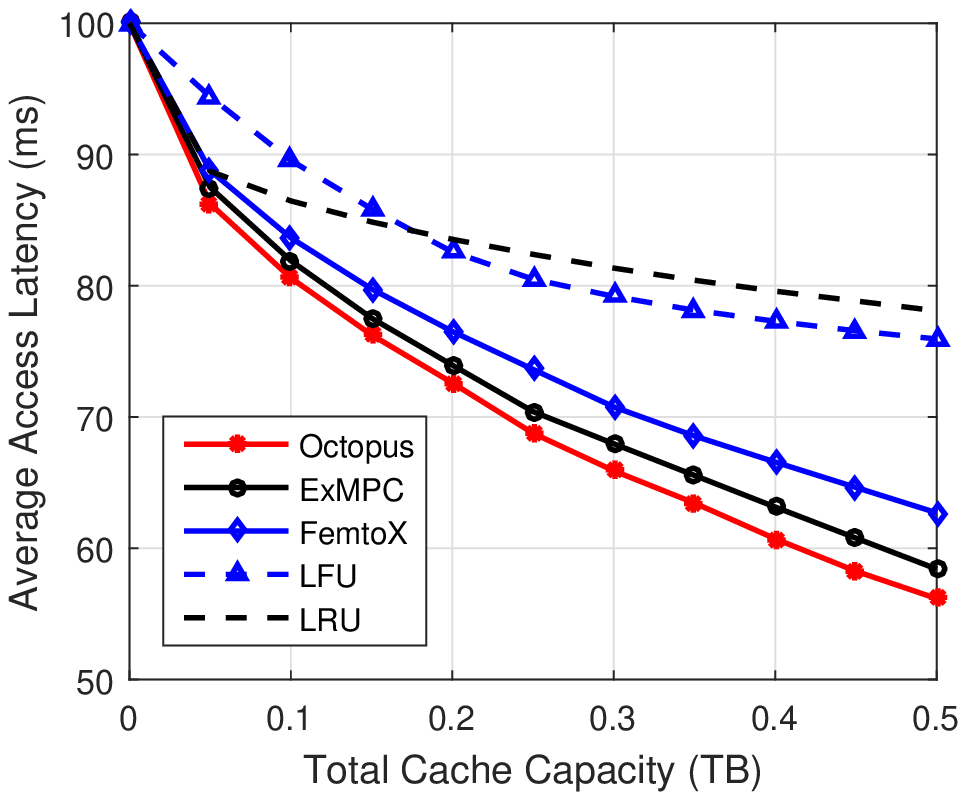} &
\hspace*{-.6cm}\includegraphics[scale = .6]{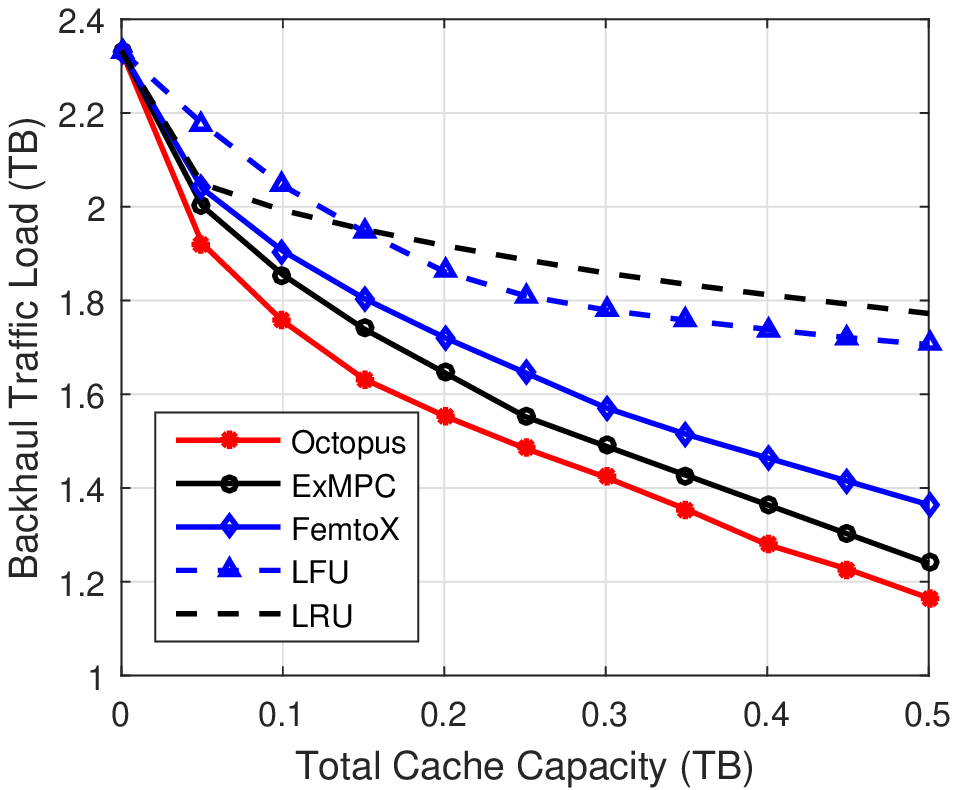} \\
 \small(a) & \small(b) & \small(c)
\end{tabular}
\caption{Performance of different cache management policies simulated using trace data.
}\label{fig:policy_trace}
\end{figure*}

Fig.~\ref{fig:cache_architecture}~(a-c) compares the performance of Octopus scheme with that of the two baseline schemes in terms of cache hit ratio, average access latency and backhaul traffic load, at different values of the total cache capacity. Observe that, owing to the additional cloud-cache layer, ECNC and Octopus schemes provide significant performance improvement compared to the Eo scheme with the same total cache capacity. For example, the performance gains of Octopus scheme over the Eo scheme when the total cache size is $0.4~\rm{TB}$ are approximately $80\%$ improvement in cache hit ratio, $21\%$ decrease in average access latency, and $20\%$ reduction in backhaul traffic load. In addition, the performance is further improved by cooperating the edge-caches, which are characterized by the gains of Octopus scheme over the ECNC scheme. 

\subsection{Advantages of Proposed Cache Management Algorithms}
In this subsection, we consider a system with both cloud-caches and edge-caches and evaluate different cache management algorithms, which identify the files to be stored in each cache. In particular, we compare the proposed Octopus scheme (which employs the PCD and RCR algorithms) with four baselines below.

\textbf{ExMPC}: This is the \emph{Exclusively Most Popular Caching} scheme where each edge-cache independently stores the most popular files, i.e., files with highest popularity. The cloud-cache stores the next most popular (second-tier) files that are not already stored in the edge-caches. This scheme aligns with the greedy algorithm presented in \cite{borst2010distributed} for inter-level cache cooperation. The exclusive mechanism in ExMPC avoids the redundancy in the pure MPC scheme \cite{cha2009analyzing} as the same files might be cached at both the edge and cloud layers.


\textbf{FemtoX}: This scheme is an extension of the FemtoCaching scheme~\cite{golrezaei2012femtocaching} to hierarchical caching system in cloud-based RAN. In FemtoCaching, the femtocell-like BSs act like the \emph{helpers} with weak backhaul links but large storage capacity. These helpers form a distributed caching network that assists the macro BS by handling requests and caches following a greedy algorithm. In FemtoX in this paper, we map each helper's cache in FemtoCaching to an edge-cache, and introduce the additional cloud-cache. 

\textbf{LFU}: This scheme is the application of the Least Frequently Used scheme \cite{lee2001lrfu} to hierarchical caching. 
When the cache is full and if there is a cache miss, LFU fetches the file from the CDN server and replaces it with the file in the cache that has been least frequently used.

\textbf{LRU}: This scheme is analogous to the LFU scheme; however when the cache is full, it chooses to evict the file that has been Least Recently Used. The cache hit ratio of LRU scheme depends on the overlap of content requests of the active users in the local RAN. 


Fig.~\ref{fig:policy_trace}~(a-c) compares the performance of Octopus caching scheme with the four baselines above. It can be seen that Octopus always achieves superior performance in terms of cache hit ratio, average access latency and backhaul traffic load. This is because the PCD algorithm helps further reduce the redundancy among the caches compared to ExMPC scheme, and due to the RCR algorithm helps updating the caches upon cache misses.

\begin{figure*}[t]
 \centering
 \begin{tabular}{ccc}
\hspace*{-.3cm}\includegraphics[scale = .6]{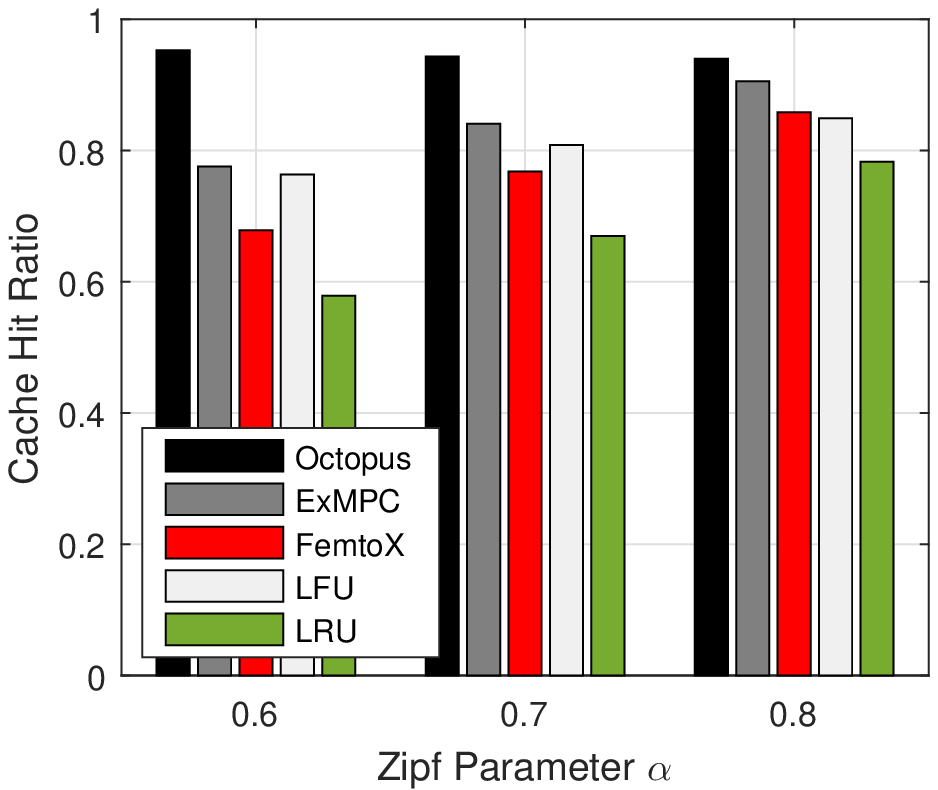} &
\hspace*{-.6cm}\includegraphics[scale = .6]{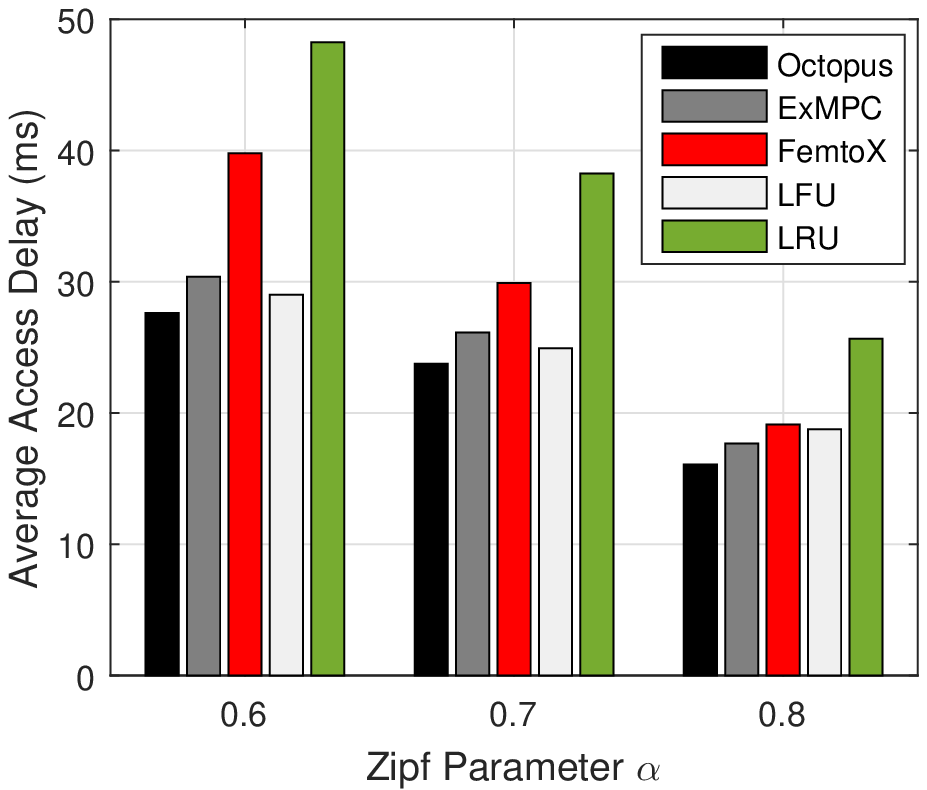} &
\hspace*{-.6cm}\includegraphics[scale = .6]{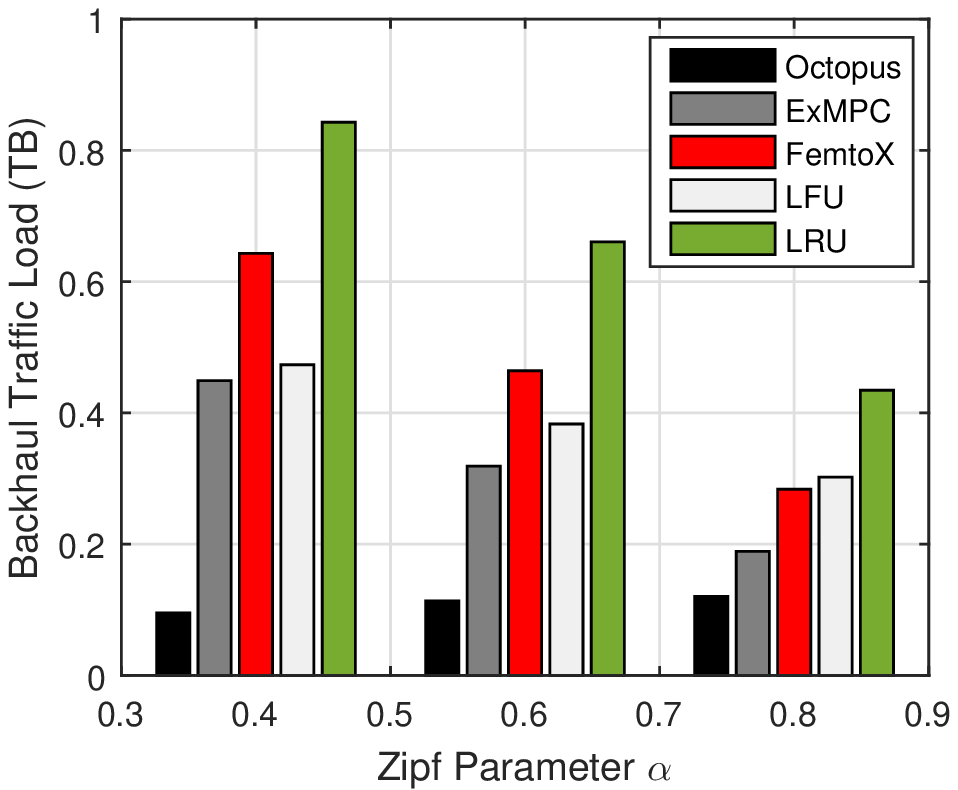} \\
 \small(a) & \small(b) & \small(c)
\end{tabular}
\caption{Performance of different cache management policies with synthetic file requests generated using the Zipf-based popularity distribution.
}\label{fig:policy_syn}
\end{figure*}

\subsection{Impact of Popularity Distributions}
In the previous subsections, using the YouTube request trace, we have demonstrated the superior performance of our proposed Octopus caching scheme over traditional caching architectures and management policies. In this subsection, to generalize the results, we evaluate the performance of Octopus using an analytical content request model. We consider that the popularity distribution of the content files follows a Zipf distribution. In particular, the request probability of the $k$th most popular content (among the set of $F$ contents) can be calculated as,
\begin{equation} \label{eq:Zipf}
{P_k} = \frac{{1/{k^\alpha }}}{{\sum\nolimits_{n = 1}^F {1/{n^\alpha }} }},
\end{equation}
where $\alpha$ is the Zipf parameter. The observed value of $\alpha$ might vary from different measurements, however it was estimated that $\alpha$ ranges from $0.64$ to $0.83$ based on the measurements of \cite{yu2006understanding, mahanti2000traffic}. To generate the synthetic requests, we consider a library of  $10,000$ files with uniform size equals to $20~\rm{MB}$. We randomly generate $100,000$ requests following the Zipf-based popularity distribution with $\alpha  \in \left[ {0.6,0.7,0.8} \right]$.

Fig.~\ref{fig:policy_syn}~(a-c) depicts the performance of Octopus caching scheme over the baselines with different content popularity distributions. Observe that as $\alpha$ increases, the performance of Octopus scheme in terms of cache hit ratio and backhaul traffic load is slightly degraded while its performance in terms of average access delay is significantly improved. In all cases, Octopus always performs the best. However, as $\alpha$ increases, the performance gaps between Octopus and the baselines become smaller.


%


\section{Conclusions} \label{sec:conclusion}
In this paper, we proposed Octopus, a cooperative hierarchical caching strategy for Cloud Radio Access Networks (C-RANs).  The proposed Octopus caching system consists of cloud-cache deployed at the CPU and distributed edge-caches deployed at the BSs. These caches are managed centrally by the CCM using the proactive cache distribution (PCD) and the reactive cache replaement (RCR) algorithms. We carried out extensive simulations using both the real world YouTube video request trace and the Zipf-based synthetic content request model. It is demonstrated that Octopus significantly outperforms traditional caching deployment architectures and cache management algorithms. Trace-driven simulations reveal that Octopus yields up to $80~\%$ improvement in cache hit ratio, $21~\%$ and $20~\%$ decrease in average access latency and backhaul traffic load, respectively, compared to the edge-only caching system using the same total cache capacity. 

\textbf{Acknowledgment: }This work was supported in part by the National Science Foundation Grant No. CNS-1319945

%
%
%

\section*{Appendix}
Here, we show that the cache placement optimization problem in (\ref{eq:prob}) is NP-complete. Firstly, notice that we have $FR\left( {R+2} \right) + (R + 1)$ constraints in (\ref{eq:prob}). We can easily verify the feasibility of any given solution in polynomial time by checking that the set of constraints is satisfied. Thus the problem is in NP. Following the approach in \cite{golrezaei2012femtocaching}, we will prove that the problem in (\ref{eq:prob}) is NP-hard by using a reduction from the set cover problem, which is known to be NP-complete problem, to an instance of our problem.

We consider a 2-disjoint set cover problem defined as follows. Given a bipartile graph $\m{G} = \left\{ {\mR,\mU,E} \right\}$ with edges $E$ connecting the set of two disjoint vertex sets $\mR$ and $\mU$ (i.e., $E$ only connects the element between two sets $\mR$ and $\mU$ while the elements in each set are disconnected). For each $r\in \m{R}$, define the subset $\m{N}_r \subseteq \m{U}$ containing the elements in $\m{U}$ that are connected to $r$ via $E$. Thus, clearly we have $\bigcup\limits_{r \in \m{R}} {{\m{N}_r}}  = {\m{U}}$. The objective of the 2-disjoint set cover problem is to find $\m{R}_1, \m{R}_2 \subseteq \m{R}$ such that ${{\m{R}}_1} \cap {\m{R}_2} = \emptyset$, $\left| {{\mR_1}} \right| + \left| {{\mR_2}} \right| = \left| \mR \right|$, and $\bigcup\nolimits_{r \in {\mR_1}} {{\mN_r}}  = \bigcup\nolimits_{r \in {\mR_2}} {{\mN_r}}  = \mU$. This problem is shown to be NP-complete \cite{cardei2005improving}.

The reduction from the 2-disjoint set cover problem to our cache placement problem in (\ref{eq:prob}) is done as follows: (i) the set of BSs is set to $\mR$, (ii) the set of users is set to $\mU$, (iii) the library of file is set to $\m{F} = \left\{ {{f_1},{f_2}} \right\}$ with the corresponding popularity $\mP = \left\{ {\frac{1}{{1 + \epsilon }},\frac{\epsilon }{{1 + \epsilon }}} \right\}$, (iv) the cache capacity at each BS is set to 1, (v) set the delay cost $d_0 = d_r = 1$, $d_{rk} = 0$ if $\mU_r \subseteq \mN_k$, and $d_{rk} = 1$ otherwise, $\forall r,k \in \m{R}$. We now show that there exists a solution to the 2-disjoint set cover problem if and only if there exists a solution to our problem in (\ref{eq:prob}) with objective value greater or equal to $U = \left| \mU \right|$. 

The first direction is easy to verify. Since all the caches have capacity $1$, they can either cache the file $f_1$ or $f_2$. If the two disjoint set covers $\mR_1, \mR_2$ exist, we can cache $f_1$ at all the BSs in $\mR_1$ and cache $f_2$ at all BSs in $\mR_2$. In this case, the objective value of our problem is equal to $U$. To prove the other direction, suppose we have a solution to our problem with objective value greater or equal to $U$, then it has to be equal to $U$ since the utility value seen by each user can be at most $1$. This is achieved only if the BSs caching $f_1$ and BSs caching $f_2$ cover the entire set of users $\mU$. This means there exist 2 disjoint set covers.

\bibliographystyle{ieeetr}
\bibliography{mobihoc16}

\begin{thebibliography}{10}

\bibitem{cisco2012cisco}
I.~Cisco, ``Cisco visual networking index: Forecast and methodology,
  2011--2016,'' {\em CISCO White paper}, pp.~2011--2016, 2012.

\bibitem{whitepaper13}
{China Mobile Research Institute}, ``{C-RAN: The Road Towards Green RAN},''
  {\em White Paper}, Sept. 2013.

\bibitem{pompili2015dynamic}
D.~Pompili, A.~Hajisami, and H.~Viswanathan, ``{Dynamic provisioning and
  allocation in Cloud Radio Access Networks (C-RANs)},'' {\em Ad Hoc Networks},
  vol.~30, pp.~128--143, 2015.

\bibitem{pompili15commag}
D.~Pompili, A.~Hajisami, and T.~X. Tran, ``{Elastic resource utilization
  framework for high capacity and energy efficiency in Cloud RAN},'' {\em IEEE
  Commun. Mag.}, vol.~54, pp.~26--32, Jan. 2016.

\bibitem{tran2015mass}
T.~X. Tran and D.~Pompili, ``{Dynamic Radio Cooperation for Downlink Cloud-RANs
  with Computing Resource Sharing},'' in {\em Proc. IEEE MASS}, pp.~118--126,
  Oct. 2015.

\bibitem{tran2016secon}
T.~X. Tran, A.~Hajisami, and D.~Pompili, ``{QuaRo: A Queue-Aware Robust
  Coordinated Transmission Strategy for Downlink C-RANs},'' in {\em Proc. IEEE
  SECON}, pp.~441--449, June 2016.

\bibitem{sarkissian2012business}
H.~Sarkissian, ``{The business case for caching in 4G LTE networks}.''
  LSI-Wireless Technical Report, 2012.

\bibitem{bastug2014living}
E.~Bastug, M.~Bennis, and M.~Debbah, ``{Living on the edge: The role of
  proactive caching in 5G wireless networks},'' {\em IEEE Commun. Mag.},
  vol.~52, no.~8, pp.~82--89, 2014.

\bibitem{ahlehagh2014video}
H.~Ahlehagh and S.~Dey, ``Video-aware scheduling and caching in the radio
  access network,'' {\em IEEE/ACM Trans. Netw.}, vol.~22, no.~5,
  pp.~1444--1462, 2014.

\bibitem{golrezaei2012femtocaching}
N.~Golrezaei, K.~Shanmugam, A.~G. Dimakis, A.~F. Molisch, and G.~Caire,
  ``Femtocaching: Wireless video content delivery through distributed caching
  helpers,'' in {\em Proc. IEEE INFOCOM}, pp.~1107--1115, 2012.

\bibitem{Gharaibeh2015online}
A.~Gharaibeh, A.~Khreishah, B.~Ji, and M.~Ayyash, ``A provably efficient online
  collaborative caching algorithm for multicell-coordinated systems,'' {\em
  IEEE Trans. Mobile Comput.}, vol.~PP, no.~99, pp.~1--14, 2015.

\bibitem{wang2014cache}
X.~Wang, M.~Chen, T.~Taleb, A.~Ksentini, and V.~Leung, ``{Cache in the air:
  exploiting content caching and delivery techniques for 5G systems},'' {\em
  IEEE Commun. Mag.}, vol.~52, no.~2, pp.~131--139, 2014.

\bibitem{ahlehagh2012hierarchical}
H.~Ahlehagh and S.~Dey, ``Hierarchical video caching in wireless cloud:
  Approaches and algorithms,'' in {\em Proc. IEEE ICC}, pp.~7082--7087, 2012.

\bibitem{sesia2009lte}
S.~Sesia, I.~Toufik, and M.~Baker, {\em {LTE: the UMTS long term evolution}}.
\newblock Wiley Online Library, 2009.

\bibitem{robson2012small}
J.~Robson, ``Small cell backhaul requirements,'' {\em NGMN White Paper},
  pp.~1--40, 2012.

\bibitem{morant2015optical}
M.~Morant, A.~Macho, and R.~Llorente, ``{Optical fronthaul of LTE-advanced MIMO
  by spatial multiplexing in multicore fiber},'' in {\em Optical Fiber
  Communication Conference}, pp.~W1F--6, Optical Society of America, 2015.

\bibitem{chanclou2013optical}
P.~Chanclou, A.~Pizzinat, F.~Le~Clech, T.-L. Reedeker, Y.~Lagadec, F.~Saliou,
  B.~Le~Guyader, L.~Guillo, Q.~Deniel, S.~Gosselin, {\em et~al.}, ``Optical
  fiber solution for mobile fronthaul to achieve cloud radio access network,''
  in {\em Proc. Future Network and Mobile Summit}, pp.~1--11, 2013.

\bibitem{bartelt2015fronthaul}
J.~Bartelt, P.~Rost, D.~Wubben, J.~Lessmann, B.~Melis, and G.~Fettweis,
  ``Fronthaul and backhaul requirements of flexibly centralized radio access
  networks,'' {\em IEEE Wireless Commun.}, vol.~22, no.~5, pp.~105--111, 2015.

\bibitem{pantisano2014network}
F.~Pantisano, M.~Bennis, W.~Saad, and M.~Debbah, ``In-network caching and
  content placement in cooperative small cell networks,'' in {\em {Proc. Int.
  Conf. on 5G for Ubiquitous Connectivity (5GU)}}, pp.~128--133, 2014.

\bibitem{gkatzikis2015clustered}
L.~Gkatzikis, V.~Sourlas, C.~Fischione, I.~Koutsopoulos, and G.~Dan,
  ``Clustered content replication for hierarchical content delivery networks,''
  in {\em Proc. IEEE ICC}, pp.~5872--5877, 2015.

\bibitem{calinescu2011maximizing}
G.~Calinescu, C.~Chekuri, M.~P{\'a}l, and J.~Vondr{\'a}k, ``Maximizing a
  monotone submodular function subject to a matroid constraint,'' {\em SIAM
  Journal on Computing}, vol.~40, no.~6, pp.~1740--1766, 2011.

\bibitem{nemhauser1978analysis}
G.~L. Nemhauser, L.~A. Wolsey, and M.~L. Fisher, ``{An analysis of
  approximations for maximizing submodular set functions - I},'' {\em
  Mathematical Programming}, vol.~14, no.~1, pp.~265--294, 1978.

\bibitem{wolsey2014integer}
L.~A. Wolsey and G.~L. Nemhauser, {\em Integer and combinatorial optimization}.
\newblock John Wiley \& Sons, 2014.

\bibitem{you2be_umass}
{http://traces.cs.umass.edu/index.php/Network}.

\bibitem{borst2010distributed}
S.~Borst, V.~Gupt, and A.~Walid, ``Distributed caching algorithms for content
  distribution networks,'' in {\em Proc. IEEE INFOCOM}, pp.~1--9, 2010.

\bibitem{cha2009analyzing}
M.~Cha, H.~Kwak, P.~Rodriguez, Y.-Y. Ahn, and S.~Moon, ``Analyzing the video
  popularity characteristics of large-scale user generated content systems,''
  {\em IEEE/ACM Trans. Netw.}, vol.~17, no.~5, pp.~1357--1370, 2009.

\bibitem{lee2001lrfu}
D.~Lee, J.~Choi, J.-H. Kim, S.~H. Noh, S.~L. Min, Y.~Cho, and C.~S. Kim,
  ``Lrfu: A spectrum of policies that subsumes the least recently used and
  least frequently used policies,'' {\em IEEE Trans. Comput.}, no.~12,
  pp.~1352--1361, 2001.

\bibitem{yu2006understanding}
H.~Yu, D.~Zheng, B.~Y. Zhao, and W.~Zheng, ``Understanding user behavior in
  large-scale video-on-demand systems,'' {\em ACM SIGOPS Operating Systems
  Review}, vol.~40, no.~4, pp.~333--344, 2006.

\bibitem{mahanti2000traffic}
A.~Mahanti, C.~Williamson, and D.~Eager, ``Traffic analysis of a web proxy
  caching hierarchy,'' {\em IEEE Network}, vol.~14, no.~3, pp.~16--23, 2000.

\bibitem{cardei2005improving}
M.~Cardei and D.-Z. Du, ``Improving wireless sensor network lifetime through
  power aware organization,'' {\em Wireless Networks}, vol.~11, no.~3,
  pp.~333--340, 2005.

\end{thebibliography}

\end{document}